\newtheorem{theorem}{Theorem}
\newtheorem{lemma}{Lemma}
\newtheorem{definition}{Definition}
\newcommand{\sty}{\displaystyle}
\begin{document}

\title{Involutions, Trace Maps, and \\ Pseudorandom Numbers }

\author{Michele Elia\thanks{Politecnico di Torino, Italy}, ~~
 Davide Schipani  \thanks{University of Zurich, Switzerland}
}

%\date{August 2016}
\maketitle

\thispagestyle{empty}

\begin{abstract}
\noindent 
Interesting properties of the partitions of  a finite field $\mathbb F_q$ induced by the combination of involutions and trace maps are studied. 
The special features of  involutions of the form  $\frac{u}{z}$, $u$ being a fixed element of $\mathbb F_q$, are exploited to generate  pseudorandom numbers, the randomness resting on the uniform distribution of the images of zero-trace elements among the sets of non-zero trace elements of  $\mathbb F_q$.
\end{abstract}

\paragraph{Keywords:} Random numbers, Kloosterman sums, Indicator functions.

\vspace{2mm}
\noindent
{\bf Mathematics Subject Classification (2010): } 12Y05, 12E30 %94B15, 94B35}}

% ********************************************************************

\vspace{8mm}

\section{Introduction}
An interesting problem in studying the partition of a finite field $\mathbb F_q$, $q=p^m$ with $p$ prime and $m>1$, in subsets of  elements with the same trace, is the study of the partition of the images of these sets via a bijection $f$ of $\mathbb F_q$ into itself by means of the trace function. \\
Let $\mathcal A_h$ be the subset of elements of trace $h \in \mathbb F_p$, and let $\mathcal B_{hk}$ be the subset of $f(\mathcal A_h)$ consisting of all elements of trace $k$. The problem is to compute the cardinalities 
$B_{hk}=|\mathcal B_{hk}|$ which form a square array. 
Using indicator functions and properties of the Kloosterman sums it will be proved that, when $f$ belongs to a certain class of functions,  $ B_{h0}=B$ holds independently of $h >0$. 
The same result can be generalized using different techniques when the base field is not a prime field. 
The next section proves these results, and the following section is devoted to examples. Lastly, a section of potential applications to pseudorandom number generation  concludes the paper.

% VEDI Estensioni dire il contenuto delle sezioni
% ********************************************************************
\section{Results}
Let $Tr_w(\alpha)$ and $N_w(\alpha)$ be the trace function and the norm function, respectively \cite{lidl,mceliece}, defined over $\mathbb F_{w^s}$, i.e.
$$  Tr_w(\alpha)= \sum_{i=0}^{s-1}  \alpha^{w^i}  ~~~~~~~,~~ ~~~N_w(\alpha)= \prod_{i=0}^{s-1}  \alpha^{w^i}  ~~,~~~~ ~~ \forall \alpha \in \mathbb F_{w^s}~~. $$

\subsection{Trace partition of $\mathbb F_{p^{m}}$ into $\mathbb F_{p}$}
The cardinality $B_{hk}$ of the sets $\mathcal B_{hk}$ form a $p\times p$ square array $\mathbf B$ with indeces $h,k=0, \ldots, p-1$, and  can be written in terms of an indicator function $\delta_p$. Let $\zeta_p$ denote a $p$-th root of unity.  An additive $p$-character over $\mathbb F_q$ is defined as
$$   \chi_p(\alpha) = e( Tr_p(\alpha)) = \zeta_p^{ Tr_p(\alpha)}  ~~~~\forall~\alpha \in \mathbb F_q~~, $$ 
 then an indicator function for the elements of trace $0$  is
$$  \delta_p(\alpha) =\frac{1}{p}  \sum_{j=0}^{p-1}  \chi_p(\alpha)^j = \frac{1}{p}  \sum_{j=0}^{p-1}  \chi_p(j\alpha) = \left\{  \begin{array}{lcl} 
                1 &  \mbox{if} &    Tr_p(\alpha)=0    \\
                0 &      &  \mbox{otherwise}   \\
        \end{array}  \right.
$$
Letting $\beta_k$ be an element of $\mathbb F_q$ with trace $k$, and using the indicator function $\delta_p$, the cardinality  of  $\mathcal B_{hk}$  is
\begin{equation}
  \label{symm}
 B_{hk}  = \sum_{\alpha \in \mathbb F_q}    \delta_p(\alpha-\beta_h)   \delta_p(f(\alpha)-\beta_k)    ~~. 
\end{equation}

\noindent
By means of this equation  the following property of involutions  \cite{charpin} can easily be shown.
\begin{lemma}
   \label{lem1}
Assuming that $f$ is an involution, the array  $\mathbf B$  of cardinalities
% of the sets $\mathcal B_{hk}$ are 
is symmetric in the indices $h$ and $k$, i.e.
$$  B_{hk} =   B_{kh}   ~~. $$
\end{lemma}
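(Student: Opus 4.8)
The plan is to exploit the structure of equation~(\ref{symm}) directly, using the defining property of an involution. The starting point is the observation that an involution $f$ satisfies $f(f(\alpha)) = \alpha$ for every $\alpha \in \mathbb F_q$; in particular $f$ is its own inverse and hence a bijection of $\mathbb F_q$ onto itself. This is the only property of $f$ that the argument will require, and it does double duty below: it makes the substitution legitimate and simultaneously supplies the identity $\alpha = f(\beta)$.

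First I would perform the change of summation variable $\beta = f(\alpha)$ in the expression
$$ B_{hk} = \sum_{\alpha \in \mathbb F_q} \delta_p(\alpha - \beta_h)\, \delta_p(f(\alpha) - \beta_k) ~~. $$
Since $f$ is a bijection, as $\alpha$ runs over all of $\mathbb F_q$ so does $\beta$, and the involution property gives $\alpha = f(\beta)$. Substituting, the two factors become $\delta_p(f(\beta) - \beta_h)$ and $\delta_p(\beta - \beta_k)$ respectively.

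Next I would simply reorder the two (commuting scalar) factors and compare the result with the defining formula~(\ref{symm}) for $B_{kh}$. After renaming the dummy variable $\beta$ back to $\alpha$, the sum reads
$$ \sum_{\alpha \in \mathbb F_q} \delta_p(\alpha - \beta_k)\, \delta_p(f(\alpha) - \beta_h) ~~, $$
which is exactly the right-hand side of~(\ref{symm}) with the indices $h$ and $k$ interchanged, i.e. $B_{kh}$. This yields $B_{hk} = B_{kh}$, as claimed.

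I do not anticipate a genuine obstacle, since the whole argument rests on the bijectivity and self-inverse character of $f$. The only point that merits a moment of care is verifying that the change of variable is legitimate, that is, that an involution is necessarily a permutation of $\mathbb F_q$; but this is immediate, because $f$ composed with itself being the identity forces $f$ to be both injective and surjective. Worth noting, too, is that the indicator functions $\delta_p$ play a purely passive role here: their precise form is irrelevant to the symmetry, which is a structural consequence of the substitution alone.
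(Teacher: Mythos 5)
Your proof is correct and follows essentially the same route as the paper: the substitution $\beta = f(\alpha)$, the involution identity $\alpha = f(\beta)$, and the identification of the resulting sum with $B_{kh}$ via equation~(\ref{symm}). Your additional remark that an involution is automatically a bijection (making the change of variable legitimate) is a point the paper leaves implicit, but it is the same argument.
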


\begin{proof}
Performing the substitution $\beta=f(\alpha)$ in equation (\ref{symm}), due to the involution $f$
 we have $\alpha= f(\beta)$, hence
$$ B_{hk} = \sum_{\beta \in \mathbb F_q}  \delta_p(f(\beta)-\beta_h) \delta_p(\beta-\beta_k),  $$
 where, by definition, the summation      on the right-hand side is $B_{kh}$.
\end{proof}

\vspace{3mm}
\noindent
The set of cardinalities $B_{hk}$ 
%form a square array, and
 can be viewed as double discrete Fourier transforms of
a sequence of Kloosterman sums. We have
$$ B_{hk}  = \sum_{\alpha \in \mathbb F_q}    \delta_p(\alpha-\beta_h)   \delta_p(f(\alpha)-\beta_k) = \frac{1}{p^2} \sum_{j=0}^{p-1} \sum_{\ell=0}^{p-1}   \sum_{\alpha \in \mathbb F_q}    e(j Tr_p(\alpha-\beta_h))   e(\ell  Tr_p(f(\alpha)-\beta_k))  ~~. $$
Since the trace operator is linear, this equation can be rewritten as
$$ B_{hk}  = \frac{1}{p^2} \sum_{j=0}^{p-1} \sum_{\ell=0}^{p-1}    \sum_{\alpha \in \mathbb F_q}    e(j Tr_p(\alpha) -j h)   e(\ell  Tr_p(f(\alpha))-\ell k)
      =\frac{1}{p^2} \sum_{j=0}^{p-1} \sum_{\ell=0}^{p-1} e(-j h-\ell k)   \sum_{\alpha \in \mathbb F_q}    e( Tr_p(j\alpha+ \ell f(\alpha)))  ~~.  $$
Considering the bijection $f$ of $\mathbb F_q$ into itself defined as
$$      f(\alpha) = \left\{ \begin{array}{lcl}
                               \sty        \frac{u}{\alpha} & ~~ &  \mbox{if} ~ \alpha \neq 0    \\
                                                              & ~~ &    \\
                                     0                       & ~~ &  \mbox{if} ~ \alpha = 0  
                                    \end{array}   \right.
$$
where $u$ is a fixed non-zero element of $\mathbb F_q$, the summation over $\alpha$ can be reduced to a Kloosterman sum
$$    \sum_{\alpha \in \mathbb F_q}    e( Tr_p(j\alpha+ \ell f(\alpha)))  =  \sum_{\alpha \in \mathbb F_q^*}    e( Tr_p(j\alpha+ \ell f(\alpha)))   +  1   = 1 + \sum_{\alpha \in \mathbb F_q^*}  \chi_p\left(j\alpha + \ell \frac{u}{\alpha}\right) ~~. $$  
Some useful properties of Kloosterman sums are recalled:
\begin{enumerate}
  \item  A Kloosterman sum $\mathbf K(\chi_p,a,b)$ over a finite field $\mathbb F_q$ is defined as  
$$ \mathbf K(\chi_p,a,b)= \sum_{\alpha \in \mathbb F_q^*}  \chi_p\left(a\alpha +  \frac{b}{\alpha}\right) ~~a,b \in  \mathbb F_q $$
  \item  $\mathbf K(\chi_p,a,b)$  is a real number.
  \item  $\mathbf K(\chi_p,a,b)$  depends only on the product $ab$ and character $\chi_p$ \cite{lidl}, precisely
 $$ \mathbf K(\chi_p,a,b)= \left\{  \begin{array}{lcl}
                 q-1 & ~~ &  \mbox{if} ~ ~ a=b= 0    \\       
                 -1 & ~~ &  \mbox{if} ~ ~ a \neq 0 ~\mbox{and} ~b= 0, ~~\mbox{or}  ~~ a = 0 ~\mbox{and} ~b\neq 0  \\  
               \mathbf K(\chi_p,ab) &  &  \mbox{if} ~ ~ ab\neq 0
      \end{array}  \right.
$$
   \item The sum satisfies the following inequality (Weil)
$$    | \mathbf K(\chi_p,ab)|  < 2 \sqrt{q}       ~~~ab \neq 0~~~.  $$  
\end{enumerate}

\begin{theorem}
   \label{lem2}
Assume as above that $f$ is the involution $\frac{u}{\alpha}$, $u\neq 0$.
The elements  $B_{h0}$ of the first column of $\mathbf B$  are equal for every $h>0$. 
Every row of the square array $B_{hk}$ with $h\geq 1$, is a permutation of the row indexed by $h=1$, and by symmetry the same occurs for the columns. 
\end{theorem}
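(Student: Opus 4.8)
The plan is to work directly from the Kloosterman-sum expansion of $B_{hk}$ derived above, namely
$$ B_{hk} = \frac{1}{p^2}\sum_{j=0}^{p-1}\sum_{\ell=0}^{p-1} e(-jh-\ell k)\left[1+\mathbf K(\chi_p,j,\ell u)\right], $$
and to exploit the single crucial feature (property~3) that $\mathbf K(\chi_p,a,b)$ depends only on the product $ab$. First I would record the value of the bracketed term in the four regimes: it equals $q$ when $j=\ell=0$; it vanishes when exactly one of $j,\ell$ is zero, since there $1+(-1)=0$ (here $u\neq0$ guarantees $\ell u=0\iff\ell=0$); and it equals $g(j\ell):=1+\mathbf K(\chi_p,j\ell u)$ when $j,\ell\neq0$, a quantity depending on $j$ and $\ell$ only through their product. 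This collapses the expansion to
$$ B_{hk} = \frac{1}{p^2}\left[\,q + \sum_{j\in\mathbb F_p^*}\sum_{\ell\in\mathbb F_p^*} e(-jh-\ell k)\,g(j\ell)\,\right]. $$

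For the first assertion I would set $k=0$ and carry out the inner sum over $\ell$ for a fixed $j\neq0$. Because $\ell\mapsto j\ell$ permutes $\mathbb F_p^*$, the sum $\sum_{\ell\in\mathbb F_p^*} g(j\ell)=\sum_{t\in\mathbb F_p^*} g(t)=:S$ is a constant independent of $j$. The whole $j$-dependence therefore reduces to $\sum_{j\in\mathbb F_p^*} e(-jh)$, which equals $-1$ for every $h\neq0$ since the full sum of $p$-th roots of unity vanishes. This yields $B_{h0}=(q-S)/p^2$, a value independent of $h>0$, which is the claimed constant $B$.

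For the second assertion I would exhibit the permutation explicitly through a change of summation variables. In the collapsed double sum I substitute $j\mapsto j h^{-1}$ and $\ell\mapsto \ell h$, both bijections of $\mathbb F_p^*$ as $h\neq0$; this sends $jh\mapsto j$ and $\ell k\mapsto \ell(hk)$, while leaving the product $j\ell$ — hence $g(j\ell)$ — unchanged. Comparison with the same formula at $h=1$ then gives $B_{hk}=B_{1,\,hk}$. As $k$ runs over $\mathbb F_p$ the map $k\mapsto hk$ is a bijection (again because $h\neq0$), so the $h$-th row is exactly the first row reindexed by multiplication by $h$; the corresponding statement for columns follows at once from the symmetry $B_{hk}=B_{kh}$ of Lemma~\ref{lem1}. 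Incidentally, taking $k=0$ here recovers the first assertion as the special case $B_{h0}=B_{1,0}$.

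The genuinely load-bearing observation — and the only place any real content enters — is the product-dependence of the Kloosterman sum combined with the fact that multiplication by a nonzero scalar permutes $\mathbb F_p^*$; everything else is bookkeeping. The one point demanding care is the boundary behaviour at $j=0$ or $\ell=0$: I must keep the degenerate Kloosterman values ($q-1$ and $-1$) separate from the generic ones, so that the mixed terms cancel correctly and only the isolated constant $q$ survives outside the $\mathbb F_p^*\times\mathbb F_p^*$ block.
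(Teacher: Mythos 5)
Your proposal is correct, and its first half is essentially the paper's own argument: the same four-regime split of the double sum (bracket equal to $q$ at $j=\ell=0$, vanishing when exactly one index is zero, and $g(j\ell)=1+K(\chi_p,j\ell u)$ on $\mathbb F_p^*\times\mathbb F_p^*$), the same observation that $\sum_{\ell\in\mathbb F_p^*}g(j\ell)$ is independent of $j$, and the same evaluation $\sum_{j\in\mathbb F_p^*}e(-jh)=-1$, recovering equation (\ref{zeroeq}). For the permutation claim, however, you take a genuinely different (and slightly cleaner) route. The paper substitutes $t=j\ell$, $j=t\bar\ell$, and recognizes the inner sum $\sum_{\ell}e(-t\bar\ell h-\ell k)$ as a second Kloosterman sum $K(\chi_o,thk)$ over the prime field, arriving at the closed form $B_{hk}=\frac{1}{p^2}\left[q+1+\sum_{t=1}^{p-1}K(\chi_o,thk)\,K(\chi_p,tu)\right]$, from which dependence on the product $hk$ alone is read off. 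You instead rescale $(j,\ell)\mapsto(jh^{-1},\ell h)$, which fixes $g(j\ell)$ and turns the phase into $e(-j-\ell(hk))$, yielding the identity $B_{hk}=B_{1,hk}$ outright. Your version is more elementary (no second Kloosterman sum is needed), makes the permutation explicit as $k\mapsto hk$, and subsumes the $k=0$ case that the paper handles separately; what the paper's route buys in exchange is the explicit representation of $B_{hk}$ as a correlation of two Kloosterman sums, which substantiates the ``double discrete Fourier transform of Kloosterman sums'' remark preceding the theorem. Both arguments rest on exactly the two facts you single out: product-dependence of the Kloosterman sum and the fact that multiplication by a nonzero scalar permutes $\mathbb F_p^*$; your bookkeeping at the degenerate indices (including the role of $u\neq 0$ in ensuring $\ell u=0\iff\ell=0$) is handled correctly.
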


%\end{document}
\begin{proof}
The value  $B_{hk}$ can be expressed in terms of Kloosterman sums as follows
$$  B_{hk}  = \frac{1}{p^2} \sum_{j=0}^{p-1} \sum_{\ell=0}^{p-1}    e(-jh -\ell k)    \left[1+ \sum_{\alpha \in \mathbb F^*}    e\left( Tr_p\left(j\alpha+\ell \frac{u}{\alpha}\right)\right)   \right]  ~~. $$
The values of the Kloosterman sum are different depending on whether or not $j$ and $\ell$ are zero. In particular, if $j=\ell=0$ its value is $q-1$; if only one of $j$ and $\ell$ is zero the value is $-1$; otherwise it depends only on the product $j\ell$ modulo $p$. Therefore, the double sum can be split into four terms

\begin{equation}
   \label{geneq}
 B_{hk}  = \frac{1}{p^2}  \left[ q -   \sum_{j=1}^{p-1} 0\cdot  e(-jh)  -   \sum_{\ell=1}^{p-1} 0 \cdot  e(-\ell k) +  \sum_{j=1}^{p-1} \sum_{\ell=1}^{p-1}   e(-jh -\ell k)   \{1+K(\chi_p, j\ell u)\}   \right]  ~~. 
\end{equation}

\noindent Taking $k=0$ in equation (\ref{geneq}) we get
$$
  B_{h0}  = \frac{1}{p^2}  \left[ q-(p-1)+  \sum_{j=1}^{p-1} e(-jh)  \sum_{\ell=1}^{p-1}     K(\chi_p, j\ell u)   \right]  ~~h \neq 0. 
$$
Noting that the product $j\ell$ in the function $K(\chi_p, j\ell u)$ is performed in $\mathbb F_q$, that is $t=j\ell$ is taken modulo $p$, it follows that, for a fixed $j$, as $\ell$ runs over $\mathbb F_p^*$, the index $t$ runs over the same range; therefore the summation  $\sum_{t=1}^{p-1}   K(\chi_p, t u) $ is independent of $j$, 
which implies that  $  B_{h0}$ is independent  of $h$, because the summation
$ \sum_{j=1}^{p-1} e(-jh) =-1$. In conclusion
\begin{equation}
   \label{zeroeq}
  B_{h0}  = \frac{1}{p^2}  \left[ q-(p-1)-  \sum_{t=1}^{p-1}     K(\chi_p, t u)   \right]  ~~h \neq 0. 
\end{equation}

\noindent When $hk\neq 0$ in equation (\ref{geneq}) we get
$$
\frac{1}{p^2}  \left[ q + 1+  \sum_{j=1}^{p-1} \sum_{\ell=1}^{p-1}   e(-jh -\ell k)    K(\chi_p, j\ell u) \right] ~~.
$$

\noindent Set $t=j \ell$,  and solving for $j$ we obtain $j=t \bar \ell$, with $\bar \ell$ denoting the inverse of $\ell$ modulo $p$,   the summation over $j$ is changed into a summation over $t$  
$$ \frac{1}{p^2}  \left[ q + 1+  \sum_{t=1}^{p-1} \sum_{\ell=1}^{p-1}   e(-t\bar \ell h -\ell k)    K(\chi_p, t u) \right]   $$
which shows that the summation over $\ell$ is a Kloosterman sum $K(\chi_o, thk)$, with $\chi_o(\gamma) =e(\gamma), ~\gamma \in \mathbb F_p$ ; we can thus write the expression as
$$ \frac{1}{p^2}  \left[ q + 1+  \sum_{t=1}^{p-1} K(\chi_o, thk)   K(\chi_p, t u) \right]   $$
If we fix $h$ and let $k$ vary from $1$ to $p-1$ we see that every row with $h\geq 1$, is a permutation of the row indexed by $h=1$, since all possible values for $hk$ in $1,\dots,p-1$ are obtained; clearly the same occurs for the columns. 

\end{proof}

\subsection{Trace partition of $\mathbb F_{p^{mr}}$ into $\mathbb F_{p^m}$}
The partition property also holds when we consider a field $\mathbb F_{p^{mr}}$ that is an extension of
$\mathbb F_{p^m}$ with $m >1$, but in this case the proof cannot be developed by arguing as above. In particular, a new indicator function that avoids using Kloosterman sums will be introduced. However, this
 will prevent bounds being derived by exploiting properties of the Kloosterman sums, as shown in the following section.

\begin{definition}
    \label{def1}
Let $\delta_q(\alpha)$ be an indicator function of the subset of the $0$-trace  elements of  $\mathbb F_{q^r}$, i.e.
$$  \delta_q(\alpha) =  1- N_p(Tr_q(\alpha))^{p-1} ~  \bmod p~ = \left\{   \begin{array}{lcl} 
                1 &  \mbox{if} &   Tr_q(\alpha)=0    \\
                0 &  \mbox{if} &   Tr_q(\alpha) \neq 0    \\
        \end{array}  \right.   ~~.
$$
\end{definition}

\noindent
This definition is a straightforward consequence of the fact that
  $Tr_q(\alpha)$ is an element of $\mathbb F_q$,  $N_p(Tr_q(\alpha))$ is an element of
$\mathbb F_p$, and  $ N_p(Tr_q(\alpha))^{p-1}$ is $0$ if and only if $Tr_q(\alpha)=0$, and  $1$ otherwise.

\noindent
Using  this indicator function, we can count the number of elements of $\mathcal B_{hk}$ as
\begin{equation}
  \label{symm2}
 B_{hk}  = \sum_{\alpha \in \mathbb F_{q^r}}    \delta_q(\alpha-\beta_h)   \delta_q(f(\alpha)-\beta_k)    ~~. 
\end{equation}
where $\beta_j \in \mathbb F_{q^r}$ is such that $Tr_q(\beta_j)= \omega_j \in \mathbb F_q$.

\begin{theorem}
  \label{theo1}
The array $B_{hk}$ of cardinalities satisfies the following properties:
\begin{itemize}
   \item $B_{hk}=B_{kh}$ if $f$ is an involution.
   \item  $B_{h0}$ is independent of $h\neq 0$ if $f=\frac{u}{z}$, with $u \in \mathbb F_{q^r}$.
   \item Every row of $B_{hk}$, with $h\geq 1$, %and $k \geq 1$, 
   is a permutation of the row $B_{1k}$ with index $1$  if $f=\frac{u}{z}$, with $u \in \mathbb F_{q^r}$.
\end{itemize}
\end{theorem}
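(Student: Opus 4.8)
The plan is to replace the Kloosterman-sum machinery of Theorem~\ref{lem2} by a single multiplicative substitution that acts directly on the sets being counted. First I would observe that, since $Tr_q(\alpha-\beta_h)=Tr_q(\alpha)-\omega_h$, the indicator $\delta_q(\alpha-\beta_h)$ equals $1$ exactly when $Tr_q(\alpha)=\omega_h$; hence $B_{hk}$ is simply the number of $\alpha\in\mathbb F_{q^r}$ with $Tr_q(\alpha)=\omega_h$ and $Tr_q(f(\alpha))=\omega_k$. The symmetry $B_{hk}=B_{kh}$ for an involution $f$ then follows verbatim as in Lemma~\ref{lem1}: substituting $\beta=f(\alpha)$ and using $\alpha=f(\beta)$ interchanges the two trace conditions in the sum~(\ref{symm2}).

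For the remaining two properties I would exploit the scaling map $\sigma_c\colon\alpha\mapsto c\alpha$ with $c\in\mathbb F_q^*$, a bijection of $\mathbb F_{q^r}^*$ onto itself. The two facts making it useful are that $Tr_q$ is $\mathbb F_q$-linear, so $Tr_q(c\alpha)=c\,Tr_q(\alpha)$, and that $f(c\alpha)=\frac{u}{c\alpha}=c^{-1}f(\alpha)$, whence $Tr_q(f(c\alpha))=c^{-1}Tr_q(f(\alpha))$. Consequently, if $\alpha$ satisfies $Tr_q(\alpha)=\omega_h$ and $Tr_q(f(\alpha))=\omega_k$, then $c\alpha$ satisfies $Tr_q(c\alpha)=c\,\omega_h$ and $Tr_q(f(c\alpha))=c^{-1}\omega_k$. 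So $\sigma_c$ would map $\mathcal B_{hk}$ bijectively onto $\mathcal B_{h'k'}$, where $\omega_{h'}=c\,\omega_h$ and $\omega_{k'}=c^{-1}\omega_k$, giving $B_{hk}=B_{h'k'}$. Here $\alpha\neq 0$ is automatic whenever $h\neq 0$, since $Tr_q(\alpha)=\omega_h\neq 0$ forces $\alpha\neq 0$, so $f(c\alpha)=u/(c\alpha)$ is the relevant branch of $f$.

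The two assertions would then drop out by choosing $c$ appropriately. Taking $k=0$, for which $\omega_0=0$, the image index satisfies $\omega_{k'}=c^{-1}\cdot 0=0$; choosing $c=\omega_{h'}/\omega_h$ for any two nonzero traces gives a bijection $\mathcal B_{h0}\to\mathcal B_{h'0}$, proving $B_{h0}$ independent of $h\neq 0$. For the row statement I would fix $h\neq 0$ and take $c=\omega_h/\omega_1$, so that $\omega_{h'}=\omega_h$ and $\sigma_c$ carries the whole row indexed by $1$ into the row indexed by $h$, sending the column index $\omega_k$ to $\omega_{k'}=(\omega_1/\omega_h)\omega_k$. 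Since multiplication by the nonzero constant $\omega_1/\omega_h$ permutes $\mathbb F_q$, this is precisely a permutation of the column indices, and $B_{1k}=B_{hk'}$ then shows that row $h$ is a permutation of row $1$.

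The point I expect to require the most care is the choice of scalar field for $c$: the argument works precisely because $c$ is taken in $\mathbb F_q^*$, over which $Tr_q$ is linear, rather than in the larger $\mathbb F_{q^r}^*$. It is worth stressing that $u$ may lie anywhere in $\mathbb F_{q^r}$, since it is untouched by $\sigma_c$ and only the scalar $c^{-1}\in\mathbb F_q$ is pulled out of the trace; this is exactly why the method sidesteps the Kloosterman sums, at the cost---as the text anticipates---of yielding no quantitative value for the common $B_{h0}$.
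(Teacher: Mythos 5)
Your proposal is correct, and it reaches the theorem by a cleaner route than the paper, though the engine underneath is the same. The paper keeps the norm-based indicator $\delta_q$ of Definition~\ref{def1} throughout: for the second bullet it expands the product of indicators in (\ref{symm2}) into four separate sums, evaluates three of them exactly, and in the fourth pulls $\omega_h$ inside the trace via $\omega_h Tr_q(\alpha)=Tr_q(\omega_h\alpha)$ followed by the substitution $\beta=\omega_h\alpha$; for the third bullet it repeats the trick with $\beta=\alpha/\omega_h$ to show the sum depends only on $\omega_h\omega_k$. Your scaling map $\sigma_c$, $c\in\mathbb F_q^*$, is exactly that substitution, but you apply it directly as a bijection of the sets $\mathcal B_{hk}$ rather than inside indicator-function sums, which lets you skip Definition~\ref{def1} and the four-term expansion entirely and handle $k=0$ and $k\neq 0$ uniformly (your relation $B_{hk}=B_{h'k'}$ for $\omega_{h'}\omega_{k'}=\omega_h\omega_k$, $h\neq 0$, subsumes both bullets); you also prove $B_{h0}$ constant directly, whereas the paper proves it for $B_{0h}$ and implicitly leans on the symmetry bullet. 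You were right to flag the two delicate points, and you handled both: $c$ must lie in $\mathbb F_q^*$ for $\mathbb F_q$-linearity of $Tr_q$, and $Tr_q(\alpha)=\omega_h\neq 0$ forces $\alpha\neq 0$ so the branch $f(\alpha)=u/\alpha$ with $f(c\alpha)=c^{-1}f(\alpha)$ is the one in play. The only thing the paper's heavier computation buys that yours does not is the explicit decomposition (\ref{zeroeq1}), $B_{0h}=2q^{r-1}-q^r+b_4$, which is what the examples section uses to compute the entry $15$ numerically; as you anticipated, your argument yields the constancy of $B_{h0}$ but no formula for its value.
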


\begin{proof}
The first property is an immediate consequence of  equation (\ref{symm2}) and the involution $f$. The change of variable $\beta=f(\alpha)$, which implies $\alpha =f(\beta)$, gives
$$  B_{hk}  = \sum_{\beta \in \mathbb F_{q^r}}    \delta_q(f(\beta)-\beta_h)   \delta_q(\beta-\beta_k)  =B_{kh}   ~~. $$
The  second property is proved writing equation (\ref{symm2}) using the given definition of $\delta_q$
$$  B_{0h}  = \sum_{\alpha \in \mathbb F_{q^r}}    \delta_q(\alpha)   \delta_q(f(\alpha)-\beta_h)   =   \sum_{\alpha \in \mathbb F_{q^r}}   [1-N_p(Tr_q(\alpha))^{p-1}] [1- N_p(Tr_q(f(\alpha))-\omega_h)^{p-1} ] ~~. $$
It is remarked that in this equation the evaluation of the powers of the norms is made modulo $p$, while 
the summation treats the resulting numbers, i.e. $0$ or $1$, as integers.
Expanding the argument of the second summation we see that it is a sum of four summations, namely
\begin{itemize}
   \item  $  \sum_{\alpha \in \mathbb F_{q^r}}  1 = q^r$
   \item  $ \sum_{\alpha \in \mathbb F_{q^r}} N_p(Tr_q(\alpha))^{p-1} = (q-1)~q^{r-1}   $, since the argument of the summation is $1$ whenever the trace of $\alpha$ is not zero, and the number of elements of $\mathbb F_{q^r}$ with non zero trace are just $q^r-q^{r-1}$.
    \item  $ \sum_{\alpha \in \mathbb F_{q^r}} N_p(Tr_q(f(\alpha))-\omega_h)^{p-1} =
                   \sum_{\beta \in \mathbb F_{q^r}} N_p(Tr_q(\beta)-\omega_h)^{p-1}   = (q-1) ~q^{r-1}   $.
    The conclusion is motivated by an argument similar to that in the previous point, simply noting that the number of elements of $\mathbb F_{q^r}$ with trace different from $\omega_h$ are just $q^r-q^{r-1}$.
    \item  $\sum_{\alpha \in \mathbb F_{q^r}}   N_p(Tr_q(\alpha))^{p-1} N_p(Tr_q(\frac{u}{\alpha})-\omega_h)^{p-1} $, since $\omega_h \neq 0$ we can collect $\omega_h$ and write
$$  \sum_{\alpha \in \mathbb F_{q^r}}   N_p(\omega_h Tr_q(\alpha))^{p-1} N_p\left( \frac{1}{\omega_h} Tr_q\left(\frac{u}{\alpha}\right)-1\right)^{p-1} = \sum_{\alpha \in \mathbb F_{q^r}}   N_p(
Tr_q(\omega_h\alpha))^{p-1} N_p\left(  Tr_q\left(\frac{u}{\omega_h\alpha}\right)-1\right)^{p-1}      $$
Now, setting $\beta=\omega_h\alpha$, it is seen that,
 as $\alpha$ runs through $\mathbb F_{q^r}$, $\beta$ does the same, and consequently
the summation is independent of $h$.
\end{itemize}
In conclusion,  $B_{0h}$ is independent of $h$, and setting $b_4= \sum_{\beta \in \mathbb F_{q^r}}   N_p(
Tr_q(\beta))^{p-1} N_p(  Tr_q(\frac{u}{\beta})-1)^{p-1}  $, we have
\begin{equation}
    \label{zeroeq1}
    B_{0h}  = q^r-2(q^r-q^{r-1}) +b_4 = 2q^{r-1}-q^r+b_4 ~~,~~     ~~1\leq h \leq q-1 ~~.  
\end{equation}

\vspace{2mm}
\noindent
The third property is proved by writing equation (\ref{symm2}) using the definition given of $\delta_q$
$$  B_{hk}  = \sum_{\alpha \in \mathbb F_{q^r}}    \delta_q(\alpha-\beta_h)   \delta_q(f(\alpha)-\beta_k)   =   \sum_{\alpha \in \mathbb F_{q^r}}   \left[1-N_p(Tr_q(\alpha)-\omega_h)^{p-1}\right] \left[1- N_p\left(Tr_q\left(\frac{u}{\alpha}\right)-\omega_k\right)^{p-1} \right] . $$
%It is remarked that in this equation the evaluation of the powers of the norms is made modulo $p$, while the summation treats the resulting numbers, i.e. $0$ or $1$, as integers. Further, 
Considering $hk\neq 0$, which %Since in this equation 
implies $\omega_h \neq 0$ and $\omega_k \neq 0$, the expression can be written as
$$  \sum_{\alpha \in \mathbb F_{q^r}}  \left[1- N_p\left( Tr_q\left(\frac{\alpha}{\omega_h}\right)-1\right)^{p-1}\right]  \left[ 1-N_p\left( Tr_q\left(\frac{u}{\omega_k \alpha}\right)-1\right)^{p-1} \right] ~~ $$
since $N_p(\omega)^{p-1}=1 \bmod p$, and $\omega Tr_q(\alpha) = Tr_q(\omega \alpha)$ if $\omega \in \mathbb F_{q^*}$.
Now, setting $\beta=\frac{\alpha}{\omega_h}$, it is seen that,
 as $\alpha$ runs through $\mathbb F_{q^r}$, $\beta$ does the same, then  the summation
$$  \sum_{\beta \in \mathbb F_{q^r}}  \left[1- N_p\left( Tr_q(\beta)-1\right)^{p-1}\right]  \left[ 1-N_p\left( Tr_q\left(\frac{u}{\omega_k\omega_h \beta}\right)-1\right)^{p-1} \right] ~~ $$
depends only on the product $\omega_k\omega_h$, thus letting $\omega_h$ to be fixed, as  $\omega_k$ runs through $\mathbb F_q^*$, the product does the same, consequently the row indexed by $h$ is a permutation of the row indexed by $1$.
\end{proof}

\subsection{Bounds}
In the case of $\mathbb F_{p^m}$, utilizing the bound for the Kloosterman sums reported above,
 the cardinality $ B_{hk}$ as obtained in equation (\ref{geneq}) can be upper bounded as
$$   B_{hk}  = \frac{1}{p^2}  \left[ q   +  \sum_{j=1}^{p-1} \sum_{\ell=1}^{p-1}   e(-jh -\ell k)   \{1+K(\chi_p, j\ell u)\}   \right]  < \frac{1}{p^2}  \left[ q   +  \sum_{j=1}^{p-1} \sum_{\ell=1}^{p-1} |  e(-jh -\ell k)   \{1+K(\chi_p, j\ell u)\} |  \right],   $$
which is bounded by
$$   \frac{1}{p^2}  \left[ q   + (p-1)^2 (1+2\sqrt{q})   \right] = p^{m-2}+  \frac{ (p-1)^2}{p^2}  (1+2\sqrt{p^m}) < p^{m-2}+  1+2\sqrt{p^m} ~~.     $$
Analogously $ B_{hk}$ can be lower bounded by
$  p^{m-2}-  1-2\sqrt{p^m} ~~.     $
Therefore
$$ |B_{hk} -p^{m-2}| < 1+ 2 \sqrt{p^m}     ~~~.  $$

% ********************************************************************

\section{Examples}
Consider the field $\mathbb F_{3^5}$ with generator a primitive polynomial $g(x)=x^5+2 x^3+2 x^2+x+1$ and let $f(z) = \frac{2 \eta^3+\eta}{z}$  with $\eta$ a root of $g(x)$ in $\mathbb F_{3^5}$. The nine values of $B_{hk} $ can be presented in a table where the first line gives the index $k$ and the first  column  gives the index $h$ as elements of $\mathbb F_3$. 

$$  \begin{array}{c|c|c|c|}
                 &  0   &   1  &   2  \\     \hline
             0   &  23   &   29 &   29  \\  \hline
             1   &  29   &   20 &   32  \\   \hline
             2   &  29   &   32  &   20  \\  \hline
       \end{array}
$$
This table clearly shows the symmetries proved in Lemma \ref{lem1} and Theorem \ref{lem2}.
The entry $29$ in the second line of this table can be directly computed using equation (\ref{zeroeq})
$$   B_{10}= \frac{1}{9}  \left[  3^5 -2+  \sum_{j=1}^{2} \sum_{\ell=1}^{2}   e(-j)   K(\chi_3, j\ell (2 \eta^3+\eta))   \right]  =\frac{1}{9} [243-2+20]= 29~~. $$
All values in the table are included in the range $[-5, 59]$.  These loose bounds are due to the small values of $p$ and $m$; for larger values the bounds are tighter.

\vspace{3mm}
\noindent
Consider the field $\mathbb F_{5^5}$ with primitive polynomial generator $g_5(x)=x^5+4 x^4+3 x^3+x^2+2 x+3$ and  let $f(z) = \frac{\theta^3+2\theta^2+3\theta}{z}$ where $\theta$ is a root of $g_5(x)$  in $\mathbb F_{5^5}$. The twenty five values of $ B_{hk} $ are presented in the following table 
%
% with the same conventions of the previous example

$$  \begin{array}{c|c|c|c|c|c|}
                   &      0   &   1  &   2 & 3  & 4 \\     \hline
             0    &  145   &   120  &   120 & 120  & 120 \\     \hline
             1    &  120   &   132  &   108 & 141  & 124 \\     \hline
             2    &  120   &   108  &   124 & 132  &  141 \\     \hline
             3    &  120   &   141  &   132 & 124  & 108 \\     \hline
             4    &  120   &   124  &   141 & 108  & 132 \\     \hline
       \end{array}
$$
The entry $120$ in the second line of this table can be directly computed using equation (\ref{zeroeq}) as
$$   B_{10}= \frac{1}{25}  \left[  5^5 -4+  \sum_{j=1}^{4} \sum_{\ell=1}^{4}   e(-j)   K(\chi_5, j\ell (  \theta^3+2\theta^2+3\theta))   \right]  =\frac{1}{25} [3125-4-121]= 120~~. $$
All values in the table are included in the range $[13, 237]$.

\vspace{3mm}
\noindent
Consider the field $\mathbb F_{4^4}$ with primitive polynomial generator $g_4=x^8+x^7+x^3+x^2+1$ and  let $f(z) = \frac{\zeta_8^7+\zeta_8^2}{z}$ where $\zeta_8 \in \mathbb F_{4^4}$ is a root of $g_4(x)$  in $\mathbb F_{4^4}$. 
The first line and column of the table contain the elements of $\mathbb F_{4}$ which play the role of indices in Theorem \ref{theo1}  
$$  \begin{array}{c|c|c|c|c|c|}
                   &    0   &   1  &   \zeta_2 & \zeta_2^2   \\     \hline
             0    &  19   &   15  &   15 & 15  \\     \hline
             1    &  15   &   12  &    21 & 16   \\     \hline
   \zeta_2    &  15   &   21  &   16 & 12  \\     \hline
  \zeta_2^2 &  15   &   16  &   12 & 21   \\     \hline
       \end{array}
$$
where $\zeta_2=\zeta_8^{17}$  is a generator of $\mathbb F_{4}^*$ as a subgroup of $\mathbb F_{4^4}^*$.

\noindent
The entry $15$ in the second line of this table can be directly computed using equation (\ref{zeroeq1}) as
$$   B_{01}=   4^4 -3 \cdot 64-3 \cdot 64 + 143= 15~~. $$

\section{Applications}
The generation of random numbers  \cite{knuth}  is an important endeavor in many theoretical and applied sciences including areas ranging from bioengineering to positioning systems, from Montecarlo techniques in numerical computations, to system simulations, and to cryptography. In this special domain, random numbers are 
crucial in the management of keys of both public and private key crypto-systems. 
For instance, in RSA based systems the random generation of large primes is instrumental to the security of the scheme. In the definition of the initial state of stream ciphers \cite{rueppel} based on linear feedback shift registers \cite{golomb} or block ciphers like AES,
the common secret key should be a secret random number to avoid direct attacks based on guessing unwisely chosen keys. In these very demanding applications,
the involution described in the previous section can be used as a component of a mechanism that generates secret pseudorandom numbers that are actually uniformly distributed over a finite set of keys. While ways of exploiting these functions for pseudorandom number generators can already be found in the literature, for instance in  \cite{nieder,shp,winter}, we present below a direct application of the theorems proved in the previous sections.

\noindent 
Abstractly, the problem is to generate a pseudorandom object~ $\mathbf r$, usually a number, in a
finite set $\mathfrak I$. %, a generation that should not suffer of any bias due to the generation method or to the operator. \\
Suppose  we want to produce uniformly distributed random numbers in the set $\{1,  \ldots, q-1 \}$, where $q$ is a prime power. Then we could pick  $\gamma$ in some extension $\mathbb F_{q^m}$, and generate a number
$\mathbf R \in \mathbb F_q$ as
$$ \mathbf R = Tr_q\left(\frac{u}{\gamma}\right)   ~~~u \in   \mathbb F_{q^m}.  $$
In view of the above theory, we should consider a $\gamma\neq 0$ with $0$-trace and disregard the output if $Tr_q\left(\frac{u}{\gamma}\right)=0$. 

\noindent 
Moreover the parameter $u$ should be thought of as an unknown, i.e. it is chosen by some device that is not under the control of the tester. Therefore, even knowing $\gamma$, it is impossible to predict the value of $\mathbf R$ with a probability different from $\frac{1}{q-1}$. Further, the probability of guessing $u$ would be $\frac{1}{q}$ since we have
$$    Tr\left(\frac{u_1}{\gamma}\right)  =Tr\left(\frac{u}{\gamma}\right)   ~~\Rightarrow ~~Tr\left(\frac{u_1-u}{\gamma}\right) =0  $$
and this equation implies  $u_1=u+\beta_0 \gamma$ with $\beta_0$ any element of $ \mathbb F_{q^m}$ of $0$-trace. Therefore, the probability of choosing a valid $u$ at random is $\frac{1}{q}$.

%a method can be the following:
%\begin{enumerate}
 % \item Let $p$ be a prime, and choose an integer $m> 1$  such that
 %   $p^m$ is large.
 % \item Take a primitive element $u_0$ of $\mathbb F_{p^m}$.
 % \item Take an element $k_0$ of $\mathbb F_{p^m}$, different from $0$ and $1$, and whose trace is %$0$. %, if not
  % subtract from $k_0$ an element of trace equal to $Tr(k_0)$.
 % \item Obtain  $\mathbf R$, the random number, as the first non-zero trace value in the sequence 
%$$    Tr_p\left(\frac{u_0}{k_0} \right), Tr_p\left(\frac{u_0^2}{k_0} \right), \ldots, %%Tr_p\left(\frac{u_0^s}{k_0} \right), ~~,     
%$$
%\end{enumerate}

\noindent
%In view of Theorem \ref{lem2}, $\mathbf R$ is an instance of a random variable  uniformly %distributed in the 
%set $\{1, 2, \ldots, p-1 \}$. 
As a consequence of the {bounds} on cardinalities $B_{hk}$, it is possible to drop the conditions on the trace being equal  or not to $0$ as above, and still have a close to uniform distribution, provided the parameters of the field are appropriately chosen.
%(as a consequence of the {bounds} on cardinalities $\mathcal B_{hk}$).

\noindent
If it is desired to obtain an instance $\mathbf r$ of a random variable uniformly distributed in a different set $\mathfrak I$, a one-to-one mapping $\Phi$ between $\{1, 2, \ldots, q-1 \}$ and $\mathfrak I$ should be considered leading to  $\mathbf r$ as $\Phi(\mathbf R)$.  \\
If the cardinality $w$ of $\mathfrak I$ is not $q-1$ above, then consider a prime of the form $Q=2\mu w+1$
(which always exists by a theorem of Dirichlet's), and obtain $\mathbf R$ as described above. 
%Then take the residue $\mathbf S=\mathbf R \bmod w$ which turns out to be an instance of an equally %distributed random variable over $\{0,1, \ldots, w-1 \}$. 
The residue  $\mathbf S=\mathbf R \bmod w$ is then taken, and is found to be an instance of an equally distributed random variable over $\{0,1, \ldots, w-1 \}$.
Finally, a one-to-one mapping $\Psi$ between the set $\{0,1,\ldots,w-1\}$ and $\mathfrak I$ is defined, obtaining $\mathbf r \in \mathfrak I$ as $\Psi(\mathbf S)$.

\section{Acknowledgements}
Research was partially supported by COST Action IC1306 and Swiss National Science Foundation grant
No. 149716.

%\section{Conclusions}
%A scheme for generating pseudorandom numbers with provable uniform probability distribution  is %proposed.
%The generating mechanism is based on properties of the partitions of extension finite fields in %subsets of elements with the same trace.
% In particular the function $b=Tr(\frac{u}{z})$ has been used as a map from $\mathbb F_{q^r}$  onto $\mathbb F_q$, which is many-to-one, thus certainly not invertible. Furthermore, given  $z$, a priori it is not possible to predict $b$. 
% If we want an instance $\mathbf r$ of a random variable uniformly distributed in a different set $\mathfrak I$, we plainly consider a one-to-one mapping $\Phi$ between $\{1, 2, \ldots, q-1 \}$ and $\mathfrak I$, and obtain $\mathbf r$ as $\Phi(\mathbf R)$.  \\
%If the cardinality $w$ of $\mathfrak I$ is not $q-1$, then consider a prime of the form $Q=2\mu w+1$
%(which always exists by a theorem of Dirichlet's), and produce $\mathbf R$ as described above. 
%Then take the residue $\mathbf S=\mathbf R \bmod w$ which turns out to be an instance of an equally distributed random variable over $\{0,1, \ldots, w-1 \}$.  \\

%\vspace{3mm}
%\noindent
%
%  \textcolor{blue}{Is the case to mention the open problem of bound computations for the trace function into $\mathbb F_{p^m}$?? } 

% ******************************************************************

\end{document}